\begin{document}

\title{Pure state `really' informationally complete with rank-1 POVM}

\author{Yu Wang\inst{1,2} , Yun Shang\inst{1,3}}
\authorrunning{ }
\titlerunning{ }
\institute{Institute of Mathematics, AMSS, CAS, Beijing, 100190, China,
\and
University of Chinese Academy of Sciences, Beijing, 100049, China\\
\and
NCMIS, AMSS, CAS, Beijing, 100190, China.\\
\email{shangyun602@163.com}}
\maketitle

\begin{abstract}
 What is the minimal number of elements in a  rank-1 positive-operator-valued measure (POVM)  which can uniquely determine any pure state in $d$-dimensional Hilbert space $\mathcal{H}_d$?
The known result is that the number is no less than $3d-2$.
We show that this lower bound is not tight except for $d=2$ or 4.
Then we give an upper bound of $4d-3$.
For $d=2$, many rank-1 POVMs with four elements can determine any pure states in $\mathcal{H}_2$.
For $d=3$, we show eight is the minimal number by construction.
For $d=4$, the minimal number is in the set of $\{10,11,12,13\}$.
We show that if this number is greater than 10, an unsettled open problem can be solved that three orthonormal bases can not distinguish all pure states in $\mathcal{H}_4$.
For any dimension $d$, we construct $d+2k-2$ adaptive rank-1 positive operators for the reconstruction of any unknown pure state in $\mathcal{H}_d$, where $1\le k \le d$.

\keywords {Quantum state tomography, Pure state, Quantum measurement, Rank-1 operators.}
\end{abstract}

\section{Introduction}
One of the central problems in quantum science and technology is the estimation of an unknown quantum state, via the measurements on a large number of copies of this state.
Quantum state tomography is the process of determining an arbitrary unknown quantum state with appropriate measurement strategies.

A quantum state $\rho$ in $d$-dimensional Hilbert space $\mathcal{H}_d$ is described by a density matrix, namely by a positive semi-definite, unit-trace $d\times d$ matrix as $S_d$.
A generalized measurement can be described by a positive operator-valued measure (POVM) \cite{Nielsen_2000}.
The POVM elements, $E_k$, satisfy the completeness condition: $\sum_kE_k=I$.
Performing this measurement on a system in state $\rho$, the probability of the $k$-th outcome is given from the Born rule, $p_k=\mbox{tr}(\rho E_k)$.
If the statistics of the outcome probabilities are sufficient to uniquely determine the state, the POVM is regarded as \emph{informationally complete} (IC) \cite{Prugovcki_1991}.

The IC-POVM can give a unique identification of an unknown state, which should distinguish any pair of different states from the statistics of probabilities.
For example, we consider a POVM, $\{\frac{1}{4}(|0\rangle\pm|1\rangle)(\langle0|\pm\langle1|), \frac{1}{4}(|0\rangle\pm i|1\rangle)(\langle0|\mp i\langle1|)\}$. It is not an IC-POVM, as the statistics of the outcome probabilities for states $\{|0\rangle,|1\rangle\}$ under this measurement are the same.
For any different quantum states $\rho_1,\rho_2\in S_d$, an IC-POVM should distinguish them from the statistics of the outcome probabilities.
That is to say,
we have $\mbox{tr}(\rho_1 E_k)\ne \mbox{tr}(\rho_2 E_k)$ for some elements $E_k$.

We know that a quantum state $\rho$ in $\mathcal{H}_d$ is specified by $d^2-1$ real parameters.
The number is reduced by one because $\mbox{tr}(\rho)=1$.
Caves \emph{et al.} constructed an IC-POVM which contains the minimal $d^2$ rank-1 elements \cite{Caves_2002}, i.e., multiples of projectors onto pure states.
If $d+1$ mutually unbiased bases (MUBs) exist in $\mathcal{H}_d$, we can construct an IC-POVM with $d(d+1)$ elements \cite{Wootters_1989}.
MUBs have the property that all inner products between projectors of different bases labeled by $i$ and $j$ are equal to $1/d$.
Another related topic is the symmetric informationally complete positive operator-valued measure (SIC-POVM) \cite{Renes_2004}.
It is comprised of $d^2$ rank-1 operators.
The inner products of all different operators are equal.
This SIC-POVM appears to exist in many dimensions.

For a state $\rho$ in $n$-qubit system, $d=2^n$.
Thus the cost of measurement resource with these measurement strategies grows exponentially with the increase of number $n$.
It is important to design schemes with lower outcomes to uniquely determine the state.
This is possible when we consider a priori information about the states to be characterized.

Denote the rank of a density matrix for state $\rho$ as $k$, $1\le k \le d$.
And make a decomposition that $S_d=\oplus_{k=1}^d S_{d,k}$, where $S_{d,k}$ is the set of all the density matrices with rank $k$.
When $k=1$, the state in $S_{d,1}$ is pure.
A pure state is specified by $d$ complex numbers, which correspond to $2d$ real numbers.
For the reason of normalization condition and freedom of a global phase, there are $2d-2$ independent real numbers totally.

Flammia, Silberfarb, and Caves \cite{Flammia_2005} showed that any POVM with less than $2d$ elements can not distinguish all pair of different states $\rho_1,\rho_2$ in $ S_{d,1}$, not even in a subset $\tilde{S}_{d,1}$, where $S_{d,1}\setminus\tilde{S}_{d,1}$ is a set of measure zero.
They gave a definition of pure-state informationally complete (PSI-complete) POVM, whose outcome probabilities are sufficient to determine any pure
states (up to a global phase), except for a set of pure states that is dense only on a set of measure zero.
That is to say, if a pure state was selected at random, then with probability 1 it would be located in $\tilde{S}_{d,1}$ and be uniquely identified.
A PSI-complete POVM with $2d$ elements is constructed, but not all the elements in this POVM are rank-1.
They constructed another PSI-complete POVM with $3d-2$ rank-1 elements and conjectured that there exists a rank-1 PSI-complete POVM with $2d$ elements.

Finkelstein proved this by a precise construction \cite{Finkelstein_2004}.
Moreover, he gave a strengthened definition of PSIR-completeness, which indicates that all pure states are uniquely determined.
For any pair of different pure states $\rho_1,\rho_2\in S_{d,1}$, a PSIR-complete POVM should distinguish them.
He showed that a rank-1 PSIR-complete POVM must have at least $3d-2$ elements and wondered whether we could reach the lower bound of $3d-2$.

There are a series of studies on the relevant topic.
For any pair of different states $\rho_1,\rho_2\in S_{d,1}$, Heinosaari, Mazzarella, and Wolf gave the minimal number of POVM elements to identify them \cite{Heinosaari_2013}.
The number is $4d-3-c(d)\alpha(d)$, where $c(d)\in[1,2]$ and $\alpha(d)$ is the number of ones appearing in the binary expansion of $d-1$;
the results in papers \cite{Mondragon_2013,Jaming_2014,Carmeli_2015} showed that four orthonormal bases, corresponding to four projective measurements, can distinguish all pure states.
For any pair of different states $\rho_1\in S_{d,1}$, $\rho_2\in S_d$,
Chen \emph{et al.} showed that a POVM must contain at least $5d-7$ elements to distinguish them \cite{Chen_2013};
Carmeli \emph{et al.} gave five orthonormal bases that are enough to distinguish them\cite{Carmeli_2016}.
For a state in $S_{d,k}$, it can be reconstructed with a high probability with $rd\log^2(d)$ outcomes via compressed sensing techniques \cite{Gross_2010}.
Goyeneche \emph{et al.} \cite{Goyeneche_2015} constructed five orthonormal bases to determine all the coefficients of any unknown input pure states.
The first basis is fixed and used to determine a subset $s_{d,1}\subset S_{d,1}$, where the pure state belongs to.
The other four bases are used to uniquely determine all the states in $s_{d,1}$.

%



In this paper, we consider the pure-state version of informational completeness with rank-1 POVM.
Firstly, we show that the lower bound of $3d-2$ is not tight in most of the cases.
It can be reached when $d=2$ and possibly be reached when $d=4$.
Then we show a result that there exist a large number of rank-1 PSIR-complete POVMs with $4d-3$ elements.
Secondly, we make a discussion about the rank-1 PSIR-complete POVMs when $d=2,3,4$.
For dimension $d=2$ and $d=3$, we construct the rank-1 PSIR-complete POVMs with the minimal number of elements, which are 4 and 8 correspondingly.
All the coefficients of an unknown pure state in $\mathcal{H}_2$ and $\mathcal{H}_3$ can be calculated by these POVMs.
For dimension $d=4$, the minimal number is in the range of $\{10,11,12,13\}$.
If it is bigger than 10, an answer can be given to a related unsolved problem, i.e., three orthonormal bases can not distinguish all pure states in $\mathcal{H}_4$.
Lastly, we construct $d+2k-2$ rank-1 positive self-adjoint operators for the tomography of any input pure states in $\mathcal{H}_d$, here $1\le k\le d$.
This is an adaptive strategy.
For any input pure state, we use $d$ operators to determine a subset $s_{d,1}\subset S_{d,1}$, where the pure state belongs to.
Together with the other $2k-2$ operators, we can uniquely determine all the pure states in $s_{d,1}$.
Thus using this adaptive method, any input pure states can be determined with at most $3d-2$ rank-1 operators.

\section{The upper and lower bounds}

In this section, we will give the upper and lower bounds of the minimal number of elements in a rank-1 PSIR-complete POVM.
Denote this minimal number as $\textbf{m}_1(d)$.
It is in the range of $[4d-3-c(d)\alpha(d),4d-3]$.

\subsection{Feasibility of 3d-2 for PSIR-complete}

In this part, we show that a rank-1 PSIR-complete POVM with $3d-2$ elements possibly exists when dimension $d=2$ or 4.
For the other dimensions, any rank-1 POVM with $3d-2$ elements cannot be PSIR-complete.
Firstly, we introduce the concept of PSIR-complete.

\emph{Definition 1}: (PSI really-completeness \cite{Finkelstein_2004}). A pure-state informationally really complete POVM on a $d$-dimensional quantum system $\mathcal{H}_d$ is a POVM whose outcome probabilities are sufficient to uniquely determine any pure state (up to a global phase).

As we introduced above, the PSIR-complete POVM can distinguish any pair of different states $\rho_1,\rho_2\in S_{d,1}$.
Neglecting the restriction of rank-1, we denote $\textbf{m}_0(d)$ to be the minimal number of elements in a PSIR-complete POVM.
Certainly, a rank-1 PSIR-complete POVM is PSIR-complete.
Thus $\textbf{m}_1(d)\ge \textbf{m}_0(d)$.
From the result in \cite{Heinosaari_2013}, $\textbf{m}_0(d)=4d-3-c(d)\alpha(d)$, where $c(d)\in[1,2]$ and $\alpha(d)$ is the number of ones appearing in the binary expansion of $d-1$.
From the conclusion by Finkelstein, $\textbf{m}_1(d)\ge 3d-2$.
But it is not clear when they are equal or whether a greater number than $3d-2$ might be required.
Now we compare the size of $\textbf{m}_0(d)$ and $3d-2$.

Let $f(d)=4d-3-c(d)\alpha(d)-(3d-2)$.
By the definition of $\alpha(d)$, we have $\log d\ge \alpha(d)$.
So $f(d)>d-1-2\log d$.
Define $g(d)\equiv d-1-2\log d$.
Then $g^{\prime}(d)=1-2/d$.
If $d>2$, it holds that $g^{\prime}(d)>0$.
And when $d=8$, $g(8)=1>0$.
So when $d\in[8,+\infty)$, $\textbf{m}_0(d)>3d-2$.
When $d\in[2,7]$, the true value of $\textbf{m}_0(d)$ is given in \cite{Heinosaari_2013}.
We have $\textbf{m}_0(2,3,4,5,6,7)=(4,8,10,16,18,23)$.
We compare this with the value of $3d-2$: $(4,7,10,13,16,19)$.
As a result, only when $d=2$ or 4, $\textbf{m}_0(d)$ can be $3d-2$.
For the other dimensions, $\textbf{m}_1(d)\ge\textbf{m}_0(d)>3d-2$.


\subsection{The upper bound of $4d-3$}

In this section, we show that $4d-3$ is the upper bound of $\textbf{m}_1(d)$.
This upper bound is given by constructing rank-1 POVMs from the minimal sets of orthonormal bases which can determine all pure states in $\mathcal{H}_d$.

\emph{Definition 2}: Let $\mathcal{B}_{0}=\{|\phi_0^k\rangle\}$,$\cdots$,$\mathcal{B}_{m-1}=\{|\phi_{m-1}^{k}\rangle\}$ be $m$ orthonormal bases of $\mathcal{H}_d$, $k=0,\cdots,d-1$. For different pure states $\rho_1, \rho_2\in\mathcal{H}_d$, they are distinguishable if
 \begin{equation} \label{inequation}
\mbox{tr}(\rho_1|\phi_{j}^{k}\rangle\langle\phi_{j}^{k}|)\ne\mbox{tr}(\rho_2|\phi_{j}^{k}\rangle\langle\phi_{j}^{k}|)
\end{equation}
for some $|\phi_j^k\rangle$. If any pair of different pure states is distinguishable by $\mathcal{B}_{0},\cdots,\mathcal{B}_{m-1}$, the bases $\{\mathcal{B}_j\}$ can distinguish all pure states \cite{Carmeli_2015}.


Obviously $m$ bases correspond to $m\cdot d$ rank-1 projections.
$E_j^k=|\phi_j^k\rangle\langle\phi_j^k|$, $j=0,\cdots,m-1$, $k=0,\cdots,d-1$.
Since $\sum_{k=0}^{d-1}E_j^k=I$, we have $\mbox{tr}(\rho I)=1$ for all pure state $\rho$.
One projection for each basis can be left out as the probability can be expressed by others.
Thus $m(d-1)$ rank-1 self-adjoint operators can distinguish all pure states.
Can these operators be transformed to a rank-1 PSIR-complete POVM?

From the proposition 3 in paper \cite{Heinosaari_2013}, we know that $m(d-1)$ self adjoint operators can be used to construct a POVM with $m(d-1)+1$ elements.\\
$A_j^k\equiv(\frac{1}{2}I+\frac{1}{2}\|E_j^k\|^{-1}E_j^k)/[m(d-1)]$. $j=0,\cdots,m-1$, $k=0,\cdots,d-2$.
Then $O\leq A_j^k\leq I/m(d-1)$ and by setting the new element $A\equiv I-\sum\nolimits_{j,k}A_j^k$ we get a new POVM.
This POVM have the same power with the self-adjoint operators $\{E_k\}$, as there exists a bijection between the outcome probabilities of both sides.
But not all of the elements are rank-1.
The following conversion can keep the elements of transformed POVM to be rank-1.

\textbf{Rank-1 conversion:} Given $n$ rank-1 positive self-adjoint operators $\{E_k:k=1,\cdots,n\}$, $G=\sum\nolimits_{k=1}^dE_k>0$, a rank-1 POVM denoted by $\{F_k: k=1,\cdots,n\}$ can be constructed.
$F_k=G^{-1/2}E_kG^{-1/2}$ and $\sum_{k=1}^nF_k=I$.

From the discussion in \cite{Flammia_2005,Finkelstein_2004} , if positive operators $\{E_k\}$ are informationally complete with respect to generic pure states (a set of measure zero can be neglected), and they can determine all (normalized and unnormalized) pure states in this set, $\{F_k\}$ is a PSI-complete POVM.
Furthermore, if positive operators $\{E_k\}$ are informationally complete with respect to all pure states, can the converted POVM $\{F_k\}$ be PSIR-complete? Here we give a sufficient condition.

\begin{theorem} Let $\{E_k\}$ be a set of rank-1 positive self-adjoint operators, whose outcome probabilities are sufficient to uniquely determine all pure states (up to a global phase).
Some of the elements satisfy the following condition:
\begin{equation}
\sum\nolimits_{k\in B} E_k= I.
\label{condition}
\end{equation}
After the rank-1 conversion, we will get a rank-1 PSIR-complete POVM $\{F_k\}$.
\end{theorem}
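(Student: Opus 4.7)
My plan is twofold: verify that $\{F_k\}$ is a well-defined rank-$1$ POVM, and then exploit condition \eqref{condition} to promote PSIR-completeness from the self-adjoint family $\{E_k\}$ to the converted POVM. The first part is essentially bookkeeping: the hypothesis $\sum_{k\in B}E_k=I$ already forces $G=\sum_kE_k\ge I>0$, so $G^{-1/2}$ exists; conjugation by the invertible self-adjoint operator $G^{-1/2}$ preserves rank and positivity, so each $F_k=G^{-1/2}E_kG^{-1/2}$ is rank-$1$ and positive, and $\sum_kF_k=G^{-1/2}GG^{-1/2}=I$.

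The second part is the real content. I would begin from the identity
\begin{equation}
\mbox{tr}(\rho F_k)=\mbox{tr}\bigl(G^{-1/2}\rho G^{-1/2}\,E_k\bigr),
\end{equation}
and, given two pure states $\rho_1,\rho_2$ with identical $\{F_k\}$-statistics, set $\sigma_i:=G^{-1/2}\rho_iG^{-1/2}$. Each $\sigma_i$ is rank-$1$ and positive semi-definite, and the matching reads $\mbox{tr}(\sigma_1 E_k)=\mbox{tr}(\sigma_2 E_k)$ for every $k$. Summing these equalities over $k\in B$ and using \eqref{condition} yields
\begin{equation}
\mbox{tr}(\sigma_1)=\mbox{tr}(\sigma_2)=:t,
\end{equation}
which is strictly positive because $\rho_i\neq 0$ and $G^{-1/2}$ is invertible. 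The rescaled operators $\tilde\sigma_i:=\sigma_i/t$ are thus genuine pure states that yield identical statistics on the self-adjoint family $\{E_k\}$, and the hypothesized PSIR-completeness of $\{E_k\}$ forces $\tilde\sigma_1=\tilde\sigma_2$. Hence $\sigma_1=\sigma_2$, and applying $G^{1/2}(\cdot)G^{1/2}$ recovers $\rho_1=\rho_2$.

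The delicate point, which I expect to be the main obstacle, is precisely the normalization step. The map $E_k\mapsto G^{-1/2}E_kG^{-1/2}$ does not preserve trace, so the auxiliary operators $\sigma_i$ are not states in general, and the matched $\{F_k\}$-statistics only constrain them up to an overall positive scalar. Condition \eqref{condition} is exactly what supplies the missing normalization, because one summation of the measurement data over the indices in $B$ then recovers the trace of $\sigma_i$; without such a condition one could at best conclude $\sigma_1\propto\sigma_2$, which is too weak to recover the original pure states. This is why the theorem pinpoints $\sum_{k\in B}E_k=I$ as a \emph{sufficient} extra hypothesis on top of the PSIR-completeness of $\{E_k\}$.
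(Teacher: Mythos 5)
Your proof is correct and rests on the same core idea as the paper's: conjugate the states by $G^{-1/2}$ and use the condition $\sum_{k\in B}E_k=I$ to recover the lost normalization directly from the measurement data. Your contrapositive formulation is in fact cleaner than the paper's --- summing the matched statistics over $k\in B$ gives $\mbox{tr}(\sigma_1)=\mbox{tr}(\sigma_2)$ at the outset, which collapses the paper's case analysis on whether the normalized images coincide and whether the constants $q_i=\mbox{tr}(G^{-1}\rho_i)$ are equal.
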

\begin{proof}
Here we prove that any pair of different pure states is distinguishable by this POVM.

Let $\rho_1$ and $\rho_2$ be an arbitrary pair of different pure states.
Define $q_i=\mbox{tr}(G^{-1}\rho_i)$ for $i=1,2$.
As $G=I+\sum_{k\notin B}E_k$, we have $\mbox{det}(G)\ne0$.
So $G$, $G^{-1/2}$ and $G^{-1}$ are of full rank. And $q_i=\mbox{tr}(G^{-1}\rho_i)\ne0$.

Then define another pair of pure states $\sigma_i=G^{-1/2}\rho_iG^{-1/2}/q_i$, $i=1,2$.
For any $k$,
\begin{equation}
\mbox{tr}(F_k\rho_i)=q_i\mbox{tr}(E_k\sigma_i).
\end{equation}

When pure states $\sigma_1$ and $\sigma_2$ are the same, as $\rho_1\ne\rho_2$,  the number $q_1$ should not be equal to $q_2$.
Thus $\mbox{tr}(F_k\rho_1)\ne\mbox{tr}(F_k\rho_2)$ for any $k$.

When pure states $\sigma_1$ and $\sigma_2$ are different, by the assumption of $\{E_k\}$, there exists some $E_k$ satisfying $\mbox{tr}(E_k\sigma_1)\ne\mbox{tr}(E_k\sigma_2)$.
If $q_1=q_2$, then $\mbox{tr}(F_k\rho_1)\ne\mbox{tr}(F_k\rho_2)$.
If not, we have $\sum_{k\in B}\nolimits\mbox{tr}(E_k\sigma_1)=\sum_{k\in B}\nolimits\mbox{tr}(E_k\sigma_2)=1$.
As we have the assumption $\sum_{k\in B}\nolimits\mbox{tr}(E_k)=I$.
Then $\sum_{k\in B}\nolimits\mbox{tr}(F_k\sigma_1)\ne\sum_{k\in B}\nolimits\mbox{tr}(F_k\sigma_2)$. Thus it can also be deduced that $\mbox{tr}(F_k\rho_1)\ne\mbox{tr}(F_k\rho_2)$ for some $k\in B$.

So the POVM $\{F_k\}$ can distinguish the different pure states $\rho_1,\rho_2\in S_{d,1}$.
This indicates that given a set of outcome probabilities $\{p_k\}$,
there is a unique pure state $\rho$ such that $p_k=\mbox{tr}(\rho F_k)$ for all $k$.
For any other different pure state $\sigma$, we can always get $\mbox{tr}(\sigma F_k)\ne p_k$ for some $k$.
Thus $\{F_k\}$ is enough to uniquely determine any pure states from the other different pure states.
With the prior knowledge that the state is pure, it can be uniquely determined.
\end{proof}

\emph{Remark:}
In this proof, we consider the case where $\{E_k\}$ can distinguish all pure states.
We can make a extension to this theorem.
If $\{E_k\}$ can distinguish all different states $\rho_1,\rho_2\in \mathcal{H}_d$, and the equation \ref{condition} still holds, then the conversed POVM $\{F_k\}$ can is informationally complete with respect to all quantum states, pure or mixed.

\begin{theorem}
Assume that $m$ orthonormal bases can distinguish all pure states in $\mathcal{H}_d$, a large number of PSIR-complete POVMs with $m(d-1)+1$ rank-1 elements can be constructed.
\end{theorem}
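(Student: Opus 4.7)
The plan is to reduce the theorem to an application of Theorem~1 by producing, from the $m$ given orthonormal bases, a set of $m(d-1)+1$ rank-1 positive self-adjoint operators that (i) still distinguish all pure states and (ii) contain a sub-collection summing to $I$.

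First, I list the $md$ rank-1 projections $E_j^k=|\phi_j^k\rangle\langle\phi_j^k|$ coming from the bases $\mathcal{B}_0,\ldots,\mathcal{B}_{m-1}$. For each fixed $j$ one has $\sum_{k=0}^{d-1}E_j^k=I$, so the probability $\mathrm{tr}(\rho E_j^{k_j})$ for any single chosen $k_j$ is determined by the remaining $d-1$ probabilities together with $\mathrm{tr}(\rho)=1$. I would single out one basis, say $\mathcal{B}_0$, and keep all $d$ of its projections; from each of the remaining $m-1$ bases I drop exactly one projection $E_j^{k_j}$ of my choice. The resulting family has cardinality $d+(m-1)(d-1)=m(d-1)+1$, every element is rank-1, and by the preceding observation its outcome probabilities carry the same information as those of the full set of $md$ projections. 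Since by hypothesis the $m$ bases distinguish all pure states, so does this reduced family.

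Next I would apply Theorem~1. Taking $B$ to be the index set of the retained basis $\mathcal{B}_0$, the completeness relation $\sum_{k\in B}E_0^k=I$ is exactly the hypothesis \eqref{condition}. Therefore the rank-1 conversion $F=G^{-1/2}EG^{-1/2}$, with $G$ the sum of all kept operators, produces a rank-1 POVM $\{F_k\}$ of the same cardinality $m(d-1)+1$, which by Theorem~1 is PSIR-complete.

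Finally, the multiplicity claim (``a large number'') follows by counting choices in the construction: there are $m$ ways to pick the basis that remains complete and, independently, $d$ choices of projection to drop in each of the other $m-1$ bases, yielding at least $m\cdot d^{m-1}$ distinct starting operator sets and hence distinct POVMs after conversion. I do not expect any real obstacle here; the only points requiring care are verifying that the truncated operator system still distinguishes pure states (a direct consequence of the normalization identity) and checking that the subset $B$ in Theorem~1 is precisely the untouched basis, so all the hypotheses of Theorem~1 are met.
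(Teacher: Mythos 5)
Your proposal is correct and follows essentially the same route as the paper: keep one full basis (so its projectors sum to $I$, meeting the hypothesis of Theorem~1), drop one projector from each of the other $m-1$ bases using the normalization identity to see no information is lost, apply the rank-1 conversion, and count $m\cdot d^{m-1}$ choices. The only difference is cosmetic: the paper additionally rescales the non-basis projectors by arbitrary non-negative factors to exhibit even more such POVMs, while you make the "no information lost after truncation" step slightly more explicit.
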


\begin{proof}
Denote these orthonormal bases as $\{\mathcal{B}_j\}$, $j=0,\cdots,m-1$.
The elements in basis $\mathcal{B}_j$ are $\{|\phi_j^k\rangle\}$, $k=0,\cdots,d-1$.
Now we pick up $m(d-1)+1$ elements from these bases.
We can randomly choose one basis $\mathcal{B}_j$ and keep all the elements in it.
The corresponding projectors satisfy $\sum_{k=0}^{d-1}|\phi_j^k\rangle\langle\phi_j^k|=I$.
Then we select $d-1$ elements at random from each one of the other bases.
Thus we will get a set of $m(d-1)+1$ elements.
There are $m\cdot d^{m-1}$ collections totally.

Each collection will correspond to $m(d-1)+1$ rank-1 projectors, which can distinguish all pure states.
They satisfy the condition in Theorem 1.
After the rank-1 conversion, we will get a rank-1 PSIR-complete POVM with $m(d-1)+1$ elements.
Moreover, we can construct a large number of PSIR-complete POVMs for each collection.
Denote the projectors to be $\{E_1,\cdots,E_d,E_{d+1},\cdots,E_{m(d-1)+1}\}$, where $\sum_{k=1}^{d}E_k=I$.
We can multiply $E_j$ by an arbitrary non-negative number $e_j$, where $j=d+1,\cdots,m(d-1)+1$.
So a new set of operators is constructed, $\{E_1,\cdots,E_d,e_{d+1}\cdot E_{d+1},\cdots,e_{m(d-1)+1}\cdot E_{m(d-1)+1}\}$.
They also satisfy the condition in Theorem 1.
The proof is complete.
\end{proof}

Various researches focus on the minimal number of orthonormal bases that can distinguish all pure states \cite{Moroz_1983,Moroz_1984,Moroz_1994,Mondragon_2013,Jaming_2014}.
This problem is almost solved.
The minimal number of orthonormal bases is summarized in \cite{Carmeli_2015}.
Moreover, four bases are constructed from a sequence of orthogonal polynomials.
For dimension $d=2$, at least three orthonormal bases are needed to distinguish all pure quantum states.
For $d=3$ and $d\ge 5$, the number is four.
For $d=4$, four bases are enough but it is not clear whether three bases can also distinguish.

So we can give the upper bound of $\textbf{m}_1(d)$.
When $d=2$, $\textbf{m}_1(2)=4$.
When $d\ge 3$, $\textbf{m}_1(d)=4d-3$.

%
%

\section{Rank-1 PSIR-complete POVMs  for $\mathcal{H}_2$, $\mathcal{H}_3$ and $\mathcal{H}_4$}

In this section, we will present some results about the rank-1 PSIR-complete POVMs for lower dimensions $d$.
In Figure \ref{1}, we show the relations between different kinds of informationally complete POVM.
An IC-POVM is a PSIR-complete POVM.

\begin{figure}
  \centering
  \includegraphics[width=4cm]{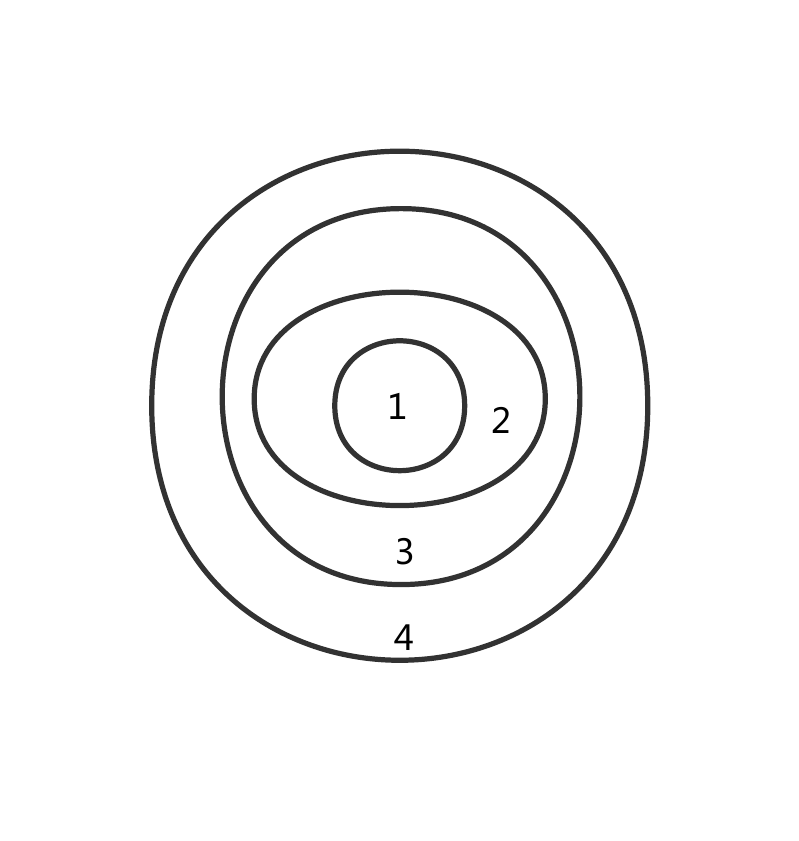}\\
  \caption{The relations of different kinds of informationally complete POVM. The labels $\{1,2,3,4\}$ stand for SIC-POVM, IC-POVM, PSIR-complete POVM, PSI-complete POVM respectively. 
  For example, an IC-POVM is a PSIR-complete POVM.}
  \label{1}
\end{figure}

\subsection{d=2}

For dimension $d=2$, four is the minimal number of elements in a rank-1 PSIR-complete POVM.
One example showed in \cite{Flammia_2005} is the following:
\begin{equation}
 E^c=a^c I+b^c\mathbf{n}^c\cdot \vec{\sigma}, \ c=1,2,3,4.
\end{equation}
The parameter: $a^c=b^c=1/4$.
$\mathbf{n}^1=(0,0,1)$,
$\mathbf{n}^2=(2\sqrt{2}/3,0,-1/3)$,
$\mathbf{n}^3=(-\sqrt{2}/3,\sqrt{2}/3,-1/3)$,
$\mathbf{n}^4=(-\sqrt{2}/3,-\sqrt{2}/3-1/3)$.
$\vec{\sigma}=(\sigma_x,\sigma_y,\sigma_z)$.
This is also a SIC-POVM.
It can distinguish all quantum states in $\mathcal{H}_2$, pure or mixed.
There are two SIC-POVMs for $d=2$ introduced in paper \cite{Renes_2004}.
The other SIC-POVM is used \cite{Rehacek_2004}, which shows the efficiency of qubit tomography.

Now we can construct 12 rank-1 IC-complete POVMs with four elements.
There are three mutually unbiased bases for $d=2$.
\begin{equation}
\mathcal{B}_0=\{|0\rangle,|1\rangle\},
\mathcal{B}_1=\{(|0\rangle\pm|1\rangle)/\sqrt{2}\},
\mathcal{B}_2=\{(|0\rangle\pm i|1\rangle)/\sqrt{2}\}.
\end{equation}
These three mutually unbiased bases can distinguish all quantum states in $\mathcal{H}_2$.
We can select four elements as introduced in Theorem 2.
There are 12 collections totally.
For example, the elements for one collection are $|0\rangle,|1\rangle, (|0\rangle+|1\rangle)/\sqrt{2},(|0\rangle+i|1\rangle)/\sqrt{2}$.
The corresponding rank-1 projectors are $|0\rangle\langle0|$, $|1\rangle\langle1|$, $(|0\rangle+|1\rangle)(\langle0|+\langle1|)/2$ and $(|0\rangle+i|1\rangle)(\langle0|-i\langle1|)/2$.
After the rank-1 conversion, we will get a rank-1 POVM with 4 elements.
Interestingly, this POVM is the special case when $d=2$ constructed by Caves \emph{et al.} \cite{Caves_2002}.

%

\subsection{d=3}
For dimension $d=3$, there are four mutually unbiased bases.
By Theorem 2, we have $4\times3^3$ collections with 9 elements.
We can construct rank-1 IC-complete POVMs with 9 elements from each selection.
By a reference to Heinosaari \emph{et al.} \cite{Heinosaari_2013}, $\textbf{m}_0(3)=8$.
So the minimal number of elements is either 8 or 9 for a rank-1 PSIR-complete POVM.
Now we show that this number is 8 by constructing 8 rank-1 operators satisfying Theorem 1.
After the rank-1 conversion, we will get a PSIR-complete POVM with 8 elements.
%
The operators are as follows:

$E_0=|0\rangle\langle 0|$,
$E_1=|1\rangle\langle 1|$,
$E_2=|2\rangle\langle 2|$,
$E_3=(|0\rangle+|1\rangle)(\langle0|+\langle1|)$,
$E_4=(|0\rangle+i|1\rangle)(\langle0|-i\langle1|)$,
$E_5=(|0\rangle+|2\rangle)(\langle0|+\langle2|)$,
$E_6=(|0\rangle+|1\rangle+|2\rangle)(\langle0|+\langle1|+\langle2|)$,
$E_7=(|0\rangle+|1\rangle+i|2\rangle)(\langle0|+\langle1|-i\langle2|)$.

Let an arbitrary unknown pure state in $\mathcal{H}_3$ be $|\phi\rangle=\sum_{k=0}^2a_ke^{i\theta_k}|k\rangle$.
Let $a_k$ be non-negative real numbers for $k=0,1,2$.
As $e^{i\pi}=-1$, we can modify the value of $\theta_k$ to guarantee $a_k\ge 0$.
Let $\theta_k$ be in the range of $[0,2\pi)$, as $e^{i\theta_k}=e^{i(\theta_k+2t\pi)}$ for integer $t$.
For the freedom choice of global phase, we let $\theta_0=0$.
The outcome probabilities can be calculated as follows:

$\mbox{tr}(E_k|\phi\rangle\langle\phi|)$=$a_k^2$, for $k=0,1,2$.\\
$\mbox{tr}(E_3|\phi\rangle\langle\phi|)$=$a_0^2+a_1^2+2a_0a_1\cos\theta_1$,\\
$\mbox{tr}(E_4|\phi\rangle\langle\phi|)$=$a_0^2+a_1^2+2a_0a_1\sin\theta_1$,\\
$\mbox{tr}(E_5|\phi\rangle\langle\phi|)$=$a_0^2+a_2^2+2a_0a_2\cos\theta_2$,\\
$\mbox{tr}(E_6|\phi\rangle\langle\phi|)$=$a_0^2+a_1^2+a_2^2+2a_0a_1\cos\theta_1+2a_0a_2\cos\theta_2+2a_1a_2\cos\theta_1\cos\theta_2+2a_1a_2\sin\theta_1\sin\theta_2$,\\
$\mbox{tr}(E_7|\phi\rangle\langle\phi|)$=$a_0^2+a_1^2+a_2^2+2a_0a_1\cos\theta_1+2a_0a_2\sin\theta_2+2a_1a_2\cos\theta_1\sin\theta_2-2a_1a_2\sin\theta_1\cos\theta_2$.\\

The coefficients of $a_k$ can be calculated by $E_k$, where $k=0,1,2$.
As the coefficient $a_k$ is non-negative, we have $a_k=\sqrt{\rm{tr}(E_k|\phi\rangle\langle\phi|)}$.
The remaining task is to determine $\theta_k$.

When only one element in $\{a_0,a_1,a_2\}$ is nonzero, it is the trivial case.
The state can be $|0\rangle$, $|1\rangle$ or $|2\rangle$.

When two elements in $\{a_0,a_1,a_2\}$ are nonzero, the state can also be determined.
For example, $a_0=0$ and $a_1,a_2\ne 0$.
We can write the state as $|\phi\rangle=a_1|1\rangle+a_2e^{i\theta_2}|2\rangle$.
The global phase of $\theta_1$ is extracted.
So $\theta_1=0$.
The remaining unknown coefficient $\theta_2$ can be calculated by the effect of $E_6$ and $E_7$.
If $a_1=0$ and $a_0,a_2\ne 0$, the state is $|\phi\rangle=a_0|0\rangle+a_2e^{i\theta_2}|2\rangle$.
The coefficient $\theta_2$ can be calculated by the effect of $E_5$ and $E_7$.
If $a_2=0$ and $a_0,a_1\ne 0$, coefficient $\theta_1$ can be calculated by the effect of $E_3$ and $E_4$.

When all elements in $\{a_0,a_1,a_2\}$ are nonzero, we let $\theta_0=0$.
Now we determine the remaining coefficients $\theta_1$ and $\theta_2$.
From the effect of $E_3$ and $E_4$, $\cos\theta_1$ and $\sin\theta_1$ can be calculated correspondingly, thus the coefficient $\theta_1$ can be uniquely determined.
After we know the values of $a_k$ and $\theta_1$, we can calculate $\cos\theta_2$ by the effect of $E_5$.
At the same time, $\sin\theta_2$ can be calculated by the effect of $E_6$ or $E_7$,
as $\cos\theta_1$ and $\sin\theta_1$ can not be both zero.
Then $\theta_2$ is uniquely determined.

Thus any pure state in $\mathcal{H}_3$ can be uniquely determined by the eight rank-1 positive self-adjoint operators.
These operators satisfies the condition in Theorem 1.
After the rank-1 conversion, we will get a PSIR-complete POVM with eight elements.
By a reference to Heinosaari \emph{et al.} \cite{Heinosaari_2013}, such POVM is one of the minimal possible resource.

\subsection{d=4}

For dimension $d=4$, the known result is that $\textbf{m}_0(4)=10$ \cite{Heinosaari_2013}.
There are five mutually unbiased bases.
Thus we can construct many rank-1 IC-POVMs with 16 elements.
Four orthonormal bases can distinguish all pure states in $\mathcal{H}_4$ \cite{Carmeli_2015}.
By theorem 1 and theorem 2, we can construct many PSIR-complete POVMs with 13 elements.
So the true value of $\textbf{m}_1(4)$ is in the range of $\{10,11,12,13\}$.

It is still not clear whether three bases can distinguish all pure states in $\mathcal{H}_4$.
No results show three bases would fail and no results give the potential support.
There are some partial answers to this question.
Three orthonormal bases consisting solely of product vectors are not enough.
In fact, even four product bases are not enough \cite{Carmeli_2015}.
Eleven is the minimum number of Pauli operators needed to uniquely determine any two-qubit pure state \cite{Ma_2016}.

We can conclude that there is no gap between $\textbf{m}_0(d)$ and $\textbf{m}_1(d)$ when $d=2,3$.
If a gap exists when $d=4$, three orthonormal bases are not enough to distinguish all pure states.
Consider the contrapositive form.
If three orthonormal bases can distinguish all pure states in $\mathcal{H}_4$, we can construct a PSIR-complete POVM with 10 elements by Theorem 2.

\section{ Adaptive $d+2k-2$ rank-1 operators for any dimensions}

Goyeneche \emph{et al.} took an adaptive method to demonstrate that any input pure state in $\mathcal{H}_d$ is unambiguously reconstructed by measuring five observables, i.e., projective measurements onto the states of five orthonormal bases \cite{Goyeneche_2015}. Thus $\sim5d$ rank-1 operators are needed.
The adaptive method is that the choice of some measurements is dependent on the result of former ones.
The fixed measurement basis is the standard, $\mathcal{B}_0=\{|0\rangle,\cdots,|d-1\rangle\}$.
We measure the pure state with this basis first.
The results of this basis will determine a subset $s_{d,1}\subset S_{d,1}$, where the input pure state belongs to.
They construct four bases $\{\mathcal{B}_1,\mathcal{B}_2,\mathcal{B}_3,\mathcal{B}_4\}$ to determine all pure states in $s_{d,1}$.

Let an arbitrary unknown input pure state in $\mathcal{H}_d$ be $|\phi\rangle=\sum_{s=0}^{d-1}a_se^{i\theta_s}|s\rangle$,  where $a_s$ is a non-negative real number and $\theta_s\in[0,2\pi)$ for $s=0,\cdots,d-1$.
We can extract the global phase to let one phase $\theta_s$ be 0.

Now we construct $d+2k-2$ adaptive rank-1 positive self-adjoint operators to determine this pure state, where $1\le k\le d$.
Thus at most $3d-2$ rank-1 elements are enough by adaptive strategy.

The first $d$ operators to be measured with are
\begin{equation}
E_s=|s\rangle\langle s|, s=0,\cdots,d-1.
\end{equation}
We can calculate the amplitudes $a_s$ by the effect of $E_s$,
$a_s=\sqrt{\mbox{tr}(E_s|\phi\rangle\langle\phi|)}$.
Then we keep track of the sites $\{s\}$ of nonzero amplitudes $\{a_s\}$ to determine a subset $s_{d,1}$.
Let $k$ be the number of nonzero amplitudes, $1\le k\le d$.

For example, the sites of nonzero amplitudes are $\{0,\cdots,d-1\}$.
Then $k=d$.
The subset $s_{d,1}$ is $\{\sum_{k=0}^{d-1}a_ke^{i\theta_k}|k\rangle: a_k\ne 0\}$.
The remaining $2d-2$ rank-1 operators are follows:
\begin{equation}
F_s=(|0\rangle+|s\rangle)(\langle0|+\langle s|), G_s=(|0\rangle+i|s\rangle)(\langle0|-i\langle s|);
\end{equation}
where $s=1,\cdots,d-1$.

We extract the global phase to make $\theta_0=0$.
We have the equations:
\begin{equation}
\left\{
\begin{aligned}
\mbox{tr}(F_s|\phi\rangle\langle\phi|) & = & a_0^2+a_s^2+2a_0a_s\cos\theta_s, \\
\mbox{tr}(G_s|\phi\rangle\langle\phi|) & = & a_0^2+a_s^2+2a_0a_s\sin\theta_s .
\end{aligned}
\right.
\end{equation}
%

From the assumption and measurement results of $E_s$, all the amplitudes $a_s$ are nonzero and known.
Then $\cos\theta_s$ and $\sin\theta_s$ can be calculated by the effect of $F_s$ and $G_s$.
All the coefficients $\theta_s$ can be uniquely determined.
Thus all coefficients of the unknown pure state in $\mathcal{H}_d$ are calculated.

The operators $E_s$ and $G_s$ appear in the construction of PSI-complete POVM given by Finkelstein \cite{Finkelstein_2004}. 
And operators $E_s$, $F_s/2$ and $G_s/2$ are some part of $d^2$ rank-1 elements in the IC-POVM constructed by Caves \emph{et al.} \cite{Caves_2002}.  
In fact, $F_s$ and $G_s$ can be the other types to calculate $\theta_s$.
For example, $F_s=(|1\rangle+|s\rangle)(\langle1|+\langle s|)$, $G_s=(|1\rangle+i|s\rangle)(\langle1|-i\langle s|)$, $s=0,2,\cdots,d-1$.

Now consider the general case, the sites of nonzero amplitudes are $\{n_0,\cdots,n_{k-1}\}$.
The subset $s_{d,1}$ is $\{\sum_{j=0}^{k-1}a_{n_j}e^{i\theta_{n_j}}|n_j\rangle:a_{n_j}\ne 0\}$.
The remaining $2k-2$ projections are as follows:
\begin{equation}
F_s=(|n_0\rangle+|n_s\rangle)(\langle n_0|+\langle n_s|), G_s=(|n_0\rangle+i|n_s\rangle)(\langle n_0|-i\langle n_s|);
\end{equation}
where $s=1,\cdots,d-k-1$.
Let the phase $\theta_{n_0}=0$.
With similar analysis, we can uniquely calculate $\cos\theta_j$ and $\sin\theta_j$ by the effect of $F_j$ and $G_j$.
All the phases $\theta_s$ and amplitudes $a_s$ of $|\phi\rangle$ can be uniquely determined.

\section{Conclusions}

We analyse the minimal number of elements in rank-1 PSIR-complete POVM.
The bound is in $[4d-3-c(d)\alpha(d),4d-3]$.
The lower bound of $3d-2$ is not tight except for $d=2$, 4.
For $d=2$, we construct many rank-1 POVMs with four elements which can distinguish all quantum states.
For $d=3$, we show that eight is the minimal number in a PSIR-complete POVM by construction.
For $d=4$, if $\textbf{m}_1(4)>10$, we can give a answer to an unsolved problem.
Three orthonormal bases can not distinguish all pure states in $\mathcal{H}_4$.
Finally, we construct $d+2k-2$ adaptive rank-1 positive self-adjoint operators to determine any input states in $\mathcal{H}_d$,  where $1\le k \le d$.
Thus we can determine an arbitrary unknown pure state in $\mathcal{H}_d$ with at most $3d-2$ rank-1 operators by adaptive strategy.

\section*{Acknowledgements}
This work was partially supported by National Key Research and Development Program of China under grant 2016YFB1000902, National Research Foundation of China (Grant No.61472412), and Program for Creative Research Group of National Natural Science Foundation of China (Grant No. 61621003).

\end{document}